\definecolor{darkgreen}{rgb}{0.2, 0.6, 0.05}
\newcounter{clemma}
\newtheorem{lemma}[clemma]{Lemma}
\pgfplotsset{compat=1.13}
\newlength\fwidth
\newcommand{\E}{\mathbb{E}}                  
\newcommand{\bo}[1]{\mathbf{#1}}              
\newcommand{\bom}[1]{\boldsymbol{#1}}    
\newcommand{\beq}{\begin{equation}}
\newcommand{\eeq}{\end{equation}}
\newcommand{\bmat}{\begin{pmatrix}}
\newcommand{\emat}{\end{pmatrix}}
\newcommand{\A}{{\bf A}}
\renewcommand{\S}{{\bo S}} 
\newcommand{\x}{\bo x}
\newcommand{\M}{\bom \Sigma}
\DeclareMathOperator{\tr}{tr}
\DeclareMathOperator{\cov}{cov}
\begin{document}

\title{Affine equivariant Tyler's M-estimator applied to tail parameter learning of elliptical distributions}

\author{
Esa~Ollila,~\IEEEmembership{Senior Member,~IEEE,} Daniel P. Palomar,~\IEEEmembership{Fellow,~IEEE,} and Fr\'ed\'eric~Pascal~\IEEEmembership{Senior Member,~IEEE} 
\thanks{Esa Ollila is with the Department of Information and Communications Engineering, Aalto University, FI-00076 Aalto, Finland (e-mail: esa.ollila@aalto.fi). Daniel P. Palomar is with the Hong Kong University of Science and Technology, Hong Kong (e-mail: palomar@ust.hk). Fr\'ed\'eric Pascal is with the CNRS Universit\'e Paris-Saclay, CentraleSup\'elec, 91190 Gif-sur-Yvette, France (e-mail:frederic.pascal@centralesupelec.fr).}}

\markboth{Affine equivariant Tyler's M-estimator}
{Ollila \MakeLowercase{\textit{et al.}}}
\maketitle

\begin{abstract} We propose estimating the scale parameter (mean of the eigenvalues) of the scatter matrix of an unspecified elliptically symmetric distribution using weights obtained by solving Tyler's M-estimator of the scatter matrix. The proposed Tyler's weights-based estimate (TWE)  of scale is then used to construct an affine equivariant Tyler's M-estimator as a weighted sample covariance matrix using normalized Tyler's weights. We then develop a unified framework for estimating the unknown  tail parameter of the elliptical distribution (such as the degrees of freedom (d.o.f.) $\nu$ of the multivariate $t$ (MVT) distribution). Using the proposed TWE of scale, a new robust estimate of the d.o.f. parameter of MVT distribution is proposed with excellent performance in heavy-tailed scenarios,  outperforming other competing methods. R-package is available that implements the proposed method.  
\end{abstract}

\begin{IEEEkeywords}
Tyler's M-estimator, scatter matrix, covariance matrix, elliptical distributions
\end{IEEEkeywords}
\IEEEpeerreviewmaketitle

\section{Introduction}
\label{sec:intro}

\IEEEPARstart{W}{e model} the observed $p$-variate observations $\x_1,\ldots,\x_n$ as independent and identically distributed (i.i.d.) random samples from an unspecified centered  (i.e., symmetric around the origin) elliptically symmetric (ES) distribution \cite{fang1990symmetric,ollila2012complex}. A continuous random vector $\x \in \mathbb{R}^p$ has centered ES distribution if 
it possesses  a probability density function (pdf) of the form
\begin{equation*}
f(\x)=C_{p,g}    |\M|^{-1/2} g ( \x^\top \M^{-1} \x),
\end{equation*}
where $g:\mathbb{R}_{\geq 0} \to \mathbb{R}_{> 0}$ is called the density
generator,  $\M \succ 0$ is the positive definite symmetric matrix parameter, called the \emph{scatter matrix}, and $C_{p,g}$ is a normalizing constant ensuring that $f(\x)$ integrates to 1. We let $\x \sim \mathcal E_p(\mathbf{0},\M,g)$ to denote this case.  For example, the centered multivariate normal (MVN) distribution  $\mathcal N_p(\mathbf{0},\M)$ is obtained when 
$g(t) = \exp(-t/2)$ while the multivariate $t$ (MVT) distribution with $\nu>0$ degrees of freedom (d.o.f.)  is  obtained when 
\beq \label{eq:pdf_t-dist} 
g(t)=(1+  t/\nu)^{-(p+\nu)/2}.
\eeq 
Parameter $\nu>0$ is a tail parameter of the density. 
For $\nu \to \infty$, the MVT distribution reduces to the MVN distribution, while $\nu=1$ corresponds to the multivariate Cauchy distribution.  Also, many other subclasses of ES distributions are parametrized by a density generator indexed by an additional tail parameter that is unknown in practice and needs to be estimated. Learning this unknown parameter is also one of the goals of this paper. 

We are interested in estimating the scale parameter of the scatter matrix, defined as the mean of its eigenvalues, 
\beq \label{eq:eta}
\eta= \frac{\tr(\M)}{p} = \frac{1}{p} \sum_{i=1}^p \lambda_i , 
\eeq
where $\lambda_i >0$ denotes the $i$th eigenvalue of $\M$. 
 Formally, $\eta \equiv \eta(\M)$ is  a \emph{scale} parameter if it verifies $\eta(\mathbf{I}) = 1$ and  $\eta(a\boldsymbol{\Sigma}) = a \eta(\boldsymbol{\Sigma})$ for all $a>0$ \cite{paindaveine2008canonical}. 

Tyler's M-estimator \cite{tyler1987distribution} is a popular robust M-estimator of the scatter matrix that has been extensively studied both in signal processing and statistics literature (e.g, \cite{sun2014regularized,pascal2014generalized,ollila2014regularized,chen2011robust,couillet2014large,soloveychik2015performance,zhang2016marvcenko,goes2020robust,romanov2023tyler}). Tyler's M-estimator is 
defined as the solution to the fixed-point equation
\beq \label{eq:Mest-tyler}
\hat \M =  \frac{1}{n} \sum_{i=1}^n  \cfrac{p}{ \x_i^\top \hat{\M}^{-1} \x_i}  \,  \x_i \x_i^\top  
\triangleq \mathcal H(\hat \M ; \{\x_i\}). 
\eeq 
Note that me way also write the map $\mathcal H(\cdot; \cdot)$ in the form
$$
\mathcal H(\hat \M ; \{\x_i\}) = \frac{1}{n} \sum_{i=1}^n  \hat w_i   \x_i \x_i^\top  \  \mbox{ with } \
\hat w_i = \cfrac{p}{ \x_i^\top \hat \M^{-1} \x_i} ,
$$
where $\hat w_i$, $i=1,\ldots, n,$ are referred to as \emph{Tyler's weights}.  
Tyler's M-estimator  is  unique only up to a scaling factor, and therefore a common convention  is to consider a solution that verifies  $\tr(\hat{\M})=p$.  Thus, Tyler's M-estimator is actually an estimator of a \emph{shape matrix} (normalized scatter matrix) $\boldsymbol{\Lambda}$, defined by 
$
\boldsymbol{\Lambda} =  \M/\eta = p \M/\tr(\M), 
$
and verifying $\tr(\boldsymbol{\Lambda})=p$.

In this paper, we propose an estimator of the scale $\eta$ based on Tyler's weights $\hat w_i$. The proposed scale estimate along with  Tyler's M-estimator $\hat \M$ are then jointly used for constructing affine equivariant robust  estimates of the scatter matrix $\M$ and the covariance matrix $\mathbf{R}=\cov(\x)$ (or their shrinkage versions). These developments are described in \autoref{sec:TWE}.  Then, in \autoref{sec:tail}, we propose a unified framework allowing to estimate the tail parameter of the elliptical distribution using the proposed scale statistic $\hat \eta$. In the case of the MVT distribution, this leads to a new estimate of the d.o.f. parameter based on Tyler's weights. Finally, Section \ref{sec:simu} demonstrates the relevance of the proposed approach on simulated data, with concluding remarks in Section \ref{sec:conclusion}. In the R package \texttt{fitHeavyTail} \cite{package_fitHeavyTail}, the function \texttt{fit\_Tyler} implements this method.

\section{Estimate of scale, scatter, and covariance matrix based on Tyler's weights}  \label{sec:TWE}

Assuming that $\x \sim   \mathcal E_p(\mathbf{0},\boldsymbol{\Sigma},g)$  has finite 2nd-order moments, then its covariance matrix, $\mathbf{R} =  \E[ \x\x^\top]$ satisfies 
\beq \label{eq:theta} 
\mathbf{R} = \theta  \cdot \boldsymbol{\Sigma} \quad \mbox{for } \quad\theta=  \frac{\E[ r^2] }{p} , 
\eeq 
where  $r^2 = \|\M^{-1/2} \x\|^2$ is the squared Mahalanobis distance of $\x$ w.r.t. $\M$, whose pdf is given by  
\beq \label{eq:rsq_pdf}
f_{r^2}(t) = C t^{p/2 -1} g(t) ,
\eeq
where $C= \int_0^\infty t^{p/2 -1} g(t)  \mathrm{d} t$. Hence pdf of $r^2$ has a one-to-one correspondence with density generator $g$. 
From \eqref{eq:theta}, we notice that the scatter matrix $\boldsymbol{\Sigma}$ is proportional to the covariance matrix $\mathbf{R}$ (assuming $\mathbf{R}$  exists). 
In the MVN case,  $\theta=1$, while for the MVT distribution with density generator as in \eqref{eq:pdf_t-dist} one obtains $\theta= \nu/(\nu-2)$ for all $\nu >2$.

\subsection{Estimate of scale} 

As mentioned earlier, Tyler's M-estimator $\hat \M$ loses information of the scale $\eta$.  However, it is yet possible to construct an estimate of $\eta$   from  Tyler's weights. 
Our Tyler's weights-based estimate (TWE) of scale is defined as  the harmonic mean of reciprocal of weights, $1/\hat w_i$'s, that is,   
\begin{equation} \label{eq:hat_tau3}
\hat \eta_{\text{TWE}} =  \bigg( \frac{1}{n} \sum_{i=1}^n \hat w_i  \bigg)^{-1}   =  \bigg( \frac{p}{n} \sum_{i=1}^n  [\x_{i}^\top \hat{\M}^{-1} \x_{i}]^{-1} \bigg)^{-1} . 
\end{equation} 
One can esily verify that this is a proper scale estimate  in the sense that if $\hat \eta_{\text{TWE}}^*$ is computed on scaled observations, $\x_i^* = c \cdot\x_i$, $i=1,\ldots, n$, then $\hat \eta_{\text{TWE}}^* = c^2  \cdot \hat \eta_{\text{TWE}}$.  This follows because Tyler's M-estimator $\hat \M$ in \eqref{eq:Mest-tyler}  with trace constraint is invariant to scaling the data, so $\hat \M^* = \hat \M$. 

The proposed estimate \eqref{eq:hat_tau3} can also be motivated from the following result  derived in the high-dimensional random matrix theory (RMT) regime, where $p,n \to \infty$ with $n>p$ and  their ratio tending to constant: $
p/n \to c \in (0,1).$
Namely, let $ \hat{\M}$ be Tyler's M-estimator in \eqref{eq:Mest-tyler} verifying $\mathrm{tr}(\hat \M)=p$. Then, it was shown in \cite{zhang2016marvcenko,romanov2023tyler} that $\max_{\ell}  | \eta \hat w_{\ell} - 1 | \to 0$ almost surely.  The authors in \cite{zhang2016marvcenko} derived this result for the case that data is i.i.d. Gaussian $\mathcal N_p(\mathbf{0}, \mathbf{I})$ while \cite{romanov2023tyler} extended these results for more general distributions.  Thus since $1/\hat w_{i}$ concentrate on $\eta$, 
\eqref{eq:hat_tau3} is a natural robust estimator of scale.  Many other robust scale statistics could be constructed from $1/\hat w_i$'s, such as the median, trimmed mean, etc. In the next subsection, we illustrate why the proposed harmonic mean \eqref{eq:hat_tau3} is the most natural. 

\subsection{Affine equivariant estimate of scatter matrix} 

 Using the scale estimate $\hat \eta_{\text{TWE}}$ and Tyler's M-estimator $\hat \M$ (with $\tr(\hat \M)=p$), we can form an estimate of the scatter matrix 
\begin{align} \label{eq:Tyler_Scatter} 
\hat \M_{\text{TWE}}  &= \hat \eta_{\textup{TWE}}   \cdot \hat \M  
\end{align} 
referred  to as \emph{TWE of scatter matrix}. 
Thus $\hat \eta_{\textup{TWE}}$ is scale statistic derived from $\hat \M_{\text{TWE}}$ since $\hat \eta_{\textup{TWE}}= \tr(\hat \M_{\text{TWE}})/p$. Equivalently, the trace of $\hat \M_{\text{TWE}}$ can be easily computed as the harmonic mean of Tyler's quadratic form:  
$$\tr(\hat \M_{\text{TWE}}) = \bigg( \frac{1}{n} \sum_{i=1}^n  [\x_{i}^\top \hat{\M}^{-1} \x_{i}]^{-1} \bigg)^{-1}.$$

Recalling \eqref{eq:hat_tau3} we can write \eqref{eq:Tyler_Scatter} in the following more intuitive form: 
\begin{subequations} \label{eq:TylerScatter}
   \begin{align}
\hat \M_{\text{TWE}}   &=   \frac{1}{n}\sum_{i=1}^n   \hat v_i \x_i \x_i^\top,  \label{eq:TylerScatter2a}  \\ 
\hat v_{i} &=  \frac{\hat w_{i}}{\frac{1}{n}\sum_{\ell=1}^n \hat w_\ell } = \frac{  [\x_{i}^\top \hat \M^{-1} \x_{i}]^{-1} } { \frac{1}{n}\sum_{\ell=1}^n  [\x_\ell^\top \hat \M^{-1} \x_\ell]^{-1} } \label{eq:TylerScatter2b} 
\end{align}
\end{subequations} 
where $\hat v_1,\ldots,\hat v_n$ are normalized Tyler's weights that verify $\frac{1}{n}\sum_{i =1}^n \hat v_i = 1$.  Eq. \eqref{eq:TylerScatter2a} and \eqref{eq:TylerScatter2b}  illustrate that $\hat \M_{\text{TWE}}$ is a \emph{weighted sample covariance matrix (SCM)} with weights $\hat v_{i}$. 

Finally, we draw the parallel of Tyler's M-estimating equation and our estimator \eqref{eq:TylerScatter}. First, note that Tyler's M-estimating equation \eqref{eq:Mest-tyler} verifies broader invariance than just invariance with respect to scaling of the data matrix. Namely, denoting the unit-norm normalized data by $\tilde \x_i = \x_i/\| \x_i \|$, $i=1,\ldots,n$,  one can easily verify that the fixed-point equation in  
\eqref{eq:Mest-tyler} can be rewritten as 
$\hat \M  = \mathcal H(\hat \M ; \{\tilde \x_i\})$, so based on normalized data.  
Furthermore, since $\tr(\hat \M) =p $, one has:
\begin{equation*}
\tr(\hat \M) = \frac{1}{n} \sum_{i=1}^n  \cfrac{p}{ \tilde\x_i^\top \hat{\M}^{-1} \tilde\x_i}  \,  \tr\left(\tilde\x_i \tilde\x_i^\top\right) = p,
\end{equation*}
or equivalently $\frac{1}{n} \sum_{i=1}^n  [\tilde\x_i^\top \hat{\M}^{-1} \tilde\x_i]^{-1} = 1$ since $\tr\left(\tilde\x_i \tilde\x_i^\top\right)=1.$ It follows that Tyler's M-estimator $\hat \M$ with $\tr(\hat \M)=p$ is the solution to the following fixed-point equation: 
\beq \label{eq:Tyler_normalized_data}
\hat \M = \frac{1}{n}\sum_{i=1}^n \tilde w_i \tilde  \x_i \tilde \x_i^\top \ \mbox{ with } \
\tilde{w}_{i} 
=  \frac{ p \, [\tilde \x_{i}^\top \hat \M^{-1} \tilde \x_{i}]^{-1} } { \frac{1}{n}\sum_{\ell=1}^n  [\tilde \x_\ell^\top \hat \M^{-1} \tilde \x_\ell]^{-1} },
\eeq
where $\tilde{w}_{\ell}$'s are the normalized Tyler's weights computed on normalized (unit norm) observations $\tilde \x_i$'s. Thus, while Tyler's M-estimator $\hat \M$ can be interpreted as a weighted SCM based on normalized data $\{\tilde \x_i\}$ as shown in \eqref{eq:Tyler_normalized_data},  TWE of scatter $\hat \M_{\textup{TWE}}$ in \eqref{eq:TylerScatter} can be viewed as weighted SCM of actual (non-normalized) data $\{\x_i\}$. 

It is worthwhile to point out that most robust estimators of scatter  are affine equivariant in the sense that an affine transformation on the data $\mathbf{x}_i  \mapsto \mathbf{A} \mathbf{x}_i$, $i = 1,\ldots,n$, induces following transformation on the estimate: 
\beq \label{eq:affeq_scatter}
\hat \M(\{ \A \x_i\})= \A \hat \M(\{\x_i\}) \A^\top, \ \forall \mathbf{A} \in \mathbb{R}^{p \times p} \mbox{ invertible.} 
\eeq
For example, robust Maronna's \cite{maronna1976robust} M-estimators, S-estimators \cite{davies1987asymptotic}, or MM-estimators \cite{tatsuoka2000uniqueness} are affine equivariant. 
However, Tyler's scatter matrix $\hat \M$ is not affine equivariant since   \eqref{eq:affeq_scatter} only holds up to multiplicative scalar factor as shown in \eqref{eq:Tyler_scatter_star}.  
Affine equivariance is desirable since if $\x \sim \mathcal E_p(\mathbf{0},\M, g) $, then $\A \x \sim \mathcal E_p(\mathbf{0},\A \M \A^\top, g)$. 
Hence the transformed data shares the same elliptical distribution, but the scatter matrix parameter is mapped to $\A \M \A^\top$. 
Thus a natural requirement to be imposed on any scatter matrix estimator is that it should verify this same equivariance  principle under transformations $\x_i \mapsto \A \x_i$. 
This is shown next.

\begin{lemma} TWE of scatter matrix $\hat \M_{\textup{TWE}}$ is affine equivariant:  $\hat \M_{\textup{TWE}}(\{ \A \x_i \})= \A \hat \M(\{\x_i\}) \A^\top$,  $\forall$ invertible $\mathbf{A} \in \mathbb{R}^{p \times p}$.
\end{lemma}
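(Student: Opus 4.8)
The plan is to exploit the weighted sample covariance matrix representation of $\hat \M_{\textup{TWE}}$ in \eqref{eq:TylerScatter} and to show that the normalized Tyler's weights $\hat v_i$ are \emph{invariant} under the substitution $\x_i \mapsto \A \x_i$. Once that is established, affine equivariance is immediate from the bilinear structure of \eqref{eq:TylerScatter2a}, since $\frac1n \sum_i \hat v_i\,(\A \x_i)(\A \x_i)^\top = \A \bigl( \frac1n \sum_i \hat v_i\, \x_i \x_i^\top \bigr) \A^\top$.

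Concretely, I would proceed in four short steps. \textbf{Step 1.} Recall that Tyler's fixed-point map is equivariant only up to scale: replacing $\x_i$ by $\A \x_i$ maps any solution $\hat \M(\{\x_i\})$ to $\hat \M(\{\A \x_i\}) = c_\A \,\A \hat \M(\{\x_i\}) \A^\top$ for some scalar $c_\A > 0$ (this is the scalar-factor statement accompanying \eqref{eq:affeq_scatter}; under the trace constraint $\tr(\hat \M)=p$ one has explicitly $c_\A = p/\tr(\A \hat \M(\{\x_i\}) \A^\top)$). \textbf{Step 2.} Using $(\A \hat \M \A^\top)^{-1} = \A^{-\top} \hat \M^{-1} \A^{-1}$, compute $(\A \x_i)^\top [\hat \M(\{\A \x_i\})]^{-1} (\A \x_i) = c_\A^{-1}\, \x_i^\top \hat \M(\{\x_i\})^{-1} \x_i$; hence each Tyler's weight $\hat w_i$ is multiplied by the same positive constant $c_\A$ under $\x_i \mapsto \A \x_i$. \textbf{Step 3.} Since $\hat v_i = \hat w_i / (\frac1n \sum_\ell \hat w_\ell)$ by \eqref{eq:TylerScatter2b}, the common factor $c_\A$ cancels between numerator and denominator, giving $\hat v_i(\{\A \x_i\}) = \hat v_i(\{\x_i\})$ for every $i$; equivalently $\hat \eta_{\textup{TWE}}(\{\A \x_i\}) = c_\A^{-1} \hat \eta_{\textup{TWE}}(\{\x_i\})$, which exactly offsets the factor $c_\A$ in $\hat \M(\{\A \x_i\})$. \textbf{Step 4.} Substitute into \eqref{eq:TylerScatter2a} to conclude $\hat \M_{\textup{TWE}}(\{\A \x_i\}) = \frac1n \sum_i \hat v_i(\{\x_i\})\,(\A \x_i)(\A \x_i)^\top = \A\, \hat \M_{\textup{TWE}}(\{\x_i\})\, \A^\top$.

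The argument involves no genuine difficulty; the only point requiring care is that Tyler's M-estimator is defined only up to a positive scalar, so the proof must be organized so that this ambiguity --- and the concrete constant $c_\A$ --- drops out cleanly. Working with the weighted-SCM form \eqref{eq:TylerScatter} makes this transparent, because $\hat v_i$ depends on $\hat \M$ only through the scale-invariant ratios $[\x_i^\top \hat \M^{-1} \x_i]^{-1} / \sum_\ell [\x_\ell^\top \hat \M^{-1} \x_\ell]^{-1}$. By contrast, arguing directly from $\hat \M_{\textup{TWE}} = \hat \eta_{\textup{TWE}} \hat \M$ would force one to track the mutual cancellation of $c_\A$ between the two factors $\hat \eta_{\textup{TWE}}$ and $\hat \M$, which is the same computation but less streamlined.
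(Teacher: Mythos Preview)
Your proposal is correct and follows essentially the same approach as the paper: both rely on the key fact that Tyler's M-estimator transforms as $\hat\M(\{\A\x_i\})=c_\A\,\A\hat\M(\{\x_i\})\A^\top$ and then show that the scalar $c_\A$ cancels. The only cosmetic difference is that the paper argues via the product form $\hat\M_{\textup{TWE}}=\hat\eta_{\textup{TWE}}\hat\M$ (precisely the ``less streamlined'' alternative you mention at the end), tracking $\hat\eta_{\textup{TWE}}^*=c_\A^{-1}\hat\eta_{\textup{TWE}}$ against $\hat\M^*=c_\A\,\A\hat\M\A^\top$, whereas you package the same cancellation inside the normalized weights $\hat v_i$.
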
 

\begin{proof} It is straightforward to verify that Tyler's M-estimator (with $\tr(\hat \M)=p$) is equivariant in the sense that if $\hat \M^* = \hat{\M}(\{ \x_i^*\})$ denotes Tyler's M-estimator (verifying $\tr(\hat \M^*) =p$)  computed on data $\x_i^*=\A \x_i$, $i=1,\ldots,n$, then 
\beq \label{eq:Tyler_scatter_star}
\hat \M^* =c  \A \hat \M \A^\top,  \ c=\frac{ p}{ \tr(\A \hat \M \A^\top)}
\eeq 
for all invertible $\A \in \mathbb{R}^{p\times p}$. 
Now let $\hat w_i^*$, $i=1,\ldots,n$   denote the corresponding Tyler's weights. 
Then the scale estimate \eqref{eq:hat_tau3} computed as the harmonic mean of  reciprocals of weights $1/\hat w_i^* = [ \x_i^*]^\top(\hat{\M}^*)^{-1} \x_i^*$ is 
\begin{align}
\hat \eta^*_{\textup{TWE}} &= \Big(\frac 1 n \sum_{i=1}^n \hat w_i^* \Big)^{-1} 
  =  \bigg( \frac{p}{n} \sum_{i=1}^n  [\x_{i}^\top \A^\top (\hat{\M}^*)^{-1} \A \x_{i}]^{-1} \bigg)^{-1}  \notag \\
 &= c^{-1} \bigg( \frac{p}{n} \sum_{i=1}^n  [\x_{i}^\top \hat{\M}^{-1}  \x_{i}]^{-1} \bigg)^{-1}  = c^{-1} \hat{\eta}_{\textup{TWE}} 
\end{align}
where in the 2nd to last identity we simply utilized \eqref{eq:Tyler_scatter_star}. Thus it follows that 
\begin{align*}
\hat{\M}_{\textup{TWE}}^* &= \hat \eta_{\textup{TWE}}^* \cdot \hat \M^* = c^{-1} \hat \eta_{\textup{TWE}} \cdot  c  \A \hat{\M}\A^{\top} \\ &= \A ( \hat \eta_{\textup{TWE}}\hat \M)\A^\top 
= \A \hat\M_{\textup{TWE}} \A^{\top}.
\end{align*}
\end{proof} 

In the case of large dimensional data, one can also consider a shrinkage Tyler's M-estimator of the scatter matrix as  
\beq \label{eq:shrink_tyl}
\hat \M_{\textup{TWE},\beta} = \beta \hat \M_{\textup{TWE}}  + (1-\beta) \hat \eta_{\textup{TWE}} \bo I, 
\eeq 
where the data adaptive shrinkage parameter $\beta \in [0,1]$ is computed as described in \cite[Sect.~IV.C]{ollila2021shrinking}. However, unlike the estimator 
in \cite{ollila2021shrinking}, the shrinkage TWE in  \eqref{eq:shrink_tyl} provides  an estimator  of scatter instead of shape matrix.

\subsection{An estimator of covariance matrix} 

If the  density generator $g$ (and hence the underlying ES distribution) is specified, then the value of  $\theta$ in \eqref{eq:theta} can be determined,  and  we can use relationship \eqref{eq:theta} to obtain a covariance matrix estimator as 
$
\hat{\mathbf{R}}_{\text{TWE}}  = \theta \cdot \hat \M_{\text{TWE}} . 
$
For example, if the data has an MVN distribution,  then 
$\theta=1$ while $\theta = \nu/(\nu-2)$ in the case of an MVT distribution with $\nu$ d.o.f.  
However, often the underlying parametric family is known, but the underlying tail parameter, say $\nu$, indexing the density generator is unknown. 
As is shown in \autoref{sec:tail}, we can form an estimate of $\nu$, denoted $\hat \nu_{\text{TWE}}$, using Tyler's weights. 
Since $\theta = h(\nu)$ ({\it cf.} Eq. \eqref{eq:eta_ratio} below), a \emph{TWE of covariance matrix} can be computed as 
$\hat{\mathbf{R}}_{\textup{TWE}}  =  \hat \theta_{\textup{TWE}} \cdot \hat \M_{\textup{TWE}}$,   
where $ \hat \theta_{\text{TWE}} = h( \hat \nu_{\text{TWE}})$.

\section{Estimating the tail parameter of ES distribution} \label{sec:tail}

From \eqref{eq:theta} we can induce the following relationship between the scale parameter $\eta _{\text{cov}} = p^{-1}\tr(\mathbf{R})$ of the covariance matrix and scale $\eta = p^{-1}\tr(\M)$ of the scatter matrix:
\beq \label{eq:eta_rat}
\eta_{\text{cov}} =  \theta  \eta    \Leftrightarrow \theta = \eta_{\text{cov}}/\eta .
\eeq 
Note that  a natural estimate of $\eta_{\text{cov}}$ is $p^{-1}\tr(\mathbf{S})$, where  $\mathbf{S} = \frac 1 n \sum_{i=1}^n \x_i \x_i^\top$ denotes the sample covariance matrix (SCM). On the other hand,  if density generator $g$ is specified up to unknown tail parameter $\nu$, thus indexed by $g_{\nu}(\cdot)$, then $\theta$ in \eqref{eq:theta} is a following function of the tail parameter $\nu$:
\beq \label{eq:eta_ratio}
\begin{aligned}
\theta  = \int_0^{\infty} t f_{r^2}(t; \nu) \mathrm{d} t 
\triangleq h(\nu)
\end{aligned}
\eeq
where the pdf $f_{r^2}(\cdot)$, defined in \eqref{eq:rsq_pdf}, is one-to-one with $g_\nu(\cdot)$. We do not need numerical integration in most practical cases as often closed-form expression for $h(\nu)$ can be derived.  Then, after solving the inverse mapping, $\nu = h^{-1}(\theta) = h^{-1}(\eta_{\text{cov}}/\eta)$, Algorithm \ref{alg:tail_param} offers a unified approach for estimating the tail parameter of an ES distribution:

\begin{algorithm}[!h]
\caption{Distribution tail parameter learning}\label{alg:tail_param}
\textbf{Input:} Data  $\{\x_i\}_{i=1}^n$ \\
\textbf{Output:} Estimated tail parameter $\nu$
\begin{enumerate}
\item[1.] Compute Tyler's M-estimator  $\hat \M$ and weights $\hat w_i$'s in \eqref{eq:Mest-tyler};
\item[2.] Compute $\hat \eta_{\text{TWE}}$ in \eqref{eq:hat_tau3}, and set  $\hat \theta_{\text{TWE}} = p^{-1}\tr(\S)/\hat \eta_{\text{TWE}}$;
\item[3.] Using \eqref{eq:eta_ratio}, estimate $\nu$ as $ \hat \nu_{\text{TWE}} = h^{-1}(\hat \theta_{\text{TWE}})$.
\end{enumerate}
\end{algorithm}

As an example, if $\x$ follows an MVT distribution with $\nu>2$ d.o.f., one has that  $\theta = h(\nu)=  \frac{\nu}{\nu -2}$, which 
unfolds the relation: 
\beq \label{eq:eta_ratio2}
\nu = h^{-1}(\theta) = \frac{2 \theta}{ \theta-1} \mbox { for }  \theta >1.
\eeq 
Note that $\nu >2$ is required for the covariance matrix $\mathbf{R}$ to exist. The obtained estimator $ \hat \nu_{\text{TWE}}$  is closely related to estimator in \cite[Alg.~1]{ollila2021shrinking}, referred to as \textbf{OPP estimator} for short.  OPP is an iterative approach that iteratively (re-)computes the maximum likelihood estimator (MLE)  $\hat \M$ of the MVT distribution with $\nu$ given by the current estimate of d.o.f. parameter $\nu^{(k)}$. It then computes $ \hat \theta = \tr(\S)/\tr(\hat \M)$ which provides an update $\nu^{(k+1)}=h^{-1}(\hat \theta)$ via \eqref{eq:eta_ratio2}. The algorithm iterates for $k=0,1,2 \ldots$ until convergence, 
starting from an initial start $\nu^{(0)} = \hat \nu_{\text{kurt}}$, where  $\hat \nu_{\text{kurt}}$  is an estimate of $\nu$ based on elliptical kurtosis,  proposed in  \cite{ollila2021shrinking}, and referred to as \textbf{kurtosis estimator}. We also proposed an improved version of OPP estimator in \cite{pascal-EUSIPCO2021}, which, however, is impractical for large $n$ and $p$. ML estimation of $\nu$ via the Expectation-Maximization (EM) approach  is considered in  \cite{LiuRubin95}. This method is unfortunately rather unstable \cite{FernandezSteel99}.
 
\section{Simulation studies} \label{sec:simu}

\begin{figure}[!t]
\setlength\fwidth{1.0\columnwidth}
\centerline{\input{tikz/MVTnu_v=5_p=100_n=600_MC=5000_rho=0dot6.tex}} \vspace{-0.1cm} 
\centerline{\input{tikz/MVTnu_v=3_p=100_n=600_MC=5000_rho=0dot6.tex}} \vspace{-0.1cm} 
  \caption{Boxplots d.o.f. estimates $\hat \nu$ as a function of $n$ when $\nu=5$ (top) and $\nu=3$ (bottom);  $p=100$, $\rho=0.6$.}
  \label{fig:boxplot_nu}
  \setlength\fwidth{1.0\columnwidth}
\centerline{\begin{tikzpicture}

\definecolor{darkgray176}{RGB}{176,176,176}
\definecolor{darkorange25512714}{RGB}{255,127,14}
\definecolor{forestgreen4416044}{RGB}{44,160,44}
\definecolor{lightgray204}{RGB}{204,204,204}
\definecolor{steelblue31119180}{RGB}{31,119,180}
 \pgfmathsetlengthmacro\MajorTickLength{
      \pgfkeysvalueof{/pgfplots/major tick length} * 0.5
    }

\begin{axis}[
width=0.56\fwidth,
height=4.7cm, 
 xlabel shift={-3pt},
label style={font=\footnotesize} , 
tick label style={font=\scriptsize} , 
legend cell align={left},
yticklabel shift=-2pt,
legend style={
  fill opacity=0.7,
  draw opacity=1,
  text opacity=1,
  font= \footnotesize,
  at={(0.5,0.5)},
  anchor=center,
  draw=lightgray204
},
tick align=outside,
tick pos=left,
every tick/.style={
        black,
        semithick,
      },
      major tick length=\MajorTickLength,
x grid style={darkgray176},
xlabel={\(\displaystyle n\)},
xmin=127.5, xmax=622.5,
xtick style={color=black},
y grid style={darkgray176},
ylabel={mean squared error of \(\displaystyle \hat \nu\)},
ymin=0, ymax=1.8,
ytick={0,0.6,1.2,1.8},
ytick style={color=black}
]

\addplot [semithick, forestgreen4416044, dashed, mark=*, mark size=2, mark options={solid}]
table {%
150 3.17536140598538
200 2.38279758633393
250 1.99045064683226
300 1.72678114257841
600 0.936936262978288
};
\addlegendentry{kurtosis}
\addplot [semithick, darkorange25512714, dashed, mark=*, mark size=2, mark options={solid}]
table {%
150 1.24990958250195
200 0.645913602379811
250 0.474801410488116
300 0.376468458255817
600 0.168580882748486
};
\addlegendentry{OPP}
\addplot [semithick, steelblue31119180, dashed, mark=*, mark size=2, mark options={solid}]
table {%
150 0.39930567236767
200 0.289894829287337
250 0.241894787612174
300 0.202821659791771
600 0.117010066052748
};
\addlegendentry{proposed}
\end{axis}

\end{tikzpicture}\begin{tikzpicture}

\definecolor{darkgray176}{RGB}{176,176,176}
\definecolor{darkorange25512714}{RGB}{255,127,14}
\definecolor{forestgreen4416044}{RGB}{44,160,44}
\definecolor{lightgray204}{RGB}{204,204,204}
\definecolor{steelblue31119180}{RGB}{31,119,180}
\pgfmathsetlengthmacro\MajorTickLength{
      \pgfkeysvalueof{/pgfplots/major tick length} * 0.5
    }

\begin{axis}[
width=0.6\fwidth,
height=4.7cm, 
xlabel shift={-3pt},
label style={font=\scriptsize} , 
tick label style={font=\scriptsize} , 
legend cell align={left},
legend style={fill opacity=0.8, draw opacity=1, text opacity=1,font=    \footnotesize, draw=lightgray204},
tick align=outside,
tick pos=left,
every tick/.style={
        black,
        semithick,
      },
      major tick length=\MajorTickLength,
yticklabel shift=-2pt,
x grid style={darkgray176},
xlabel={\(\displaystyle n\)},
xmin=127.5, xmax=622.5,
xtick style={color=black},
y grid style={darkgray176},
ymin=0, ymax=0.4,
ytick style={color=black}
]

\addplot [semithick, darkorange25512714, dashed, mark=*, mark size=2, mark options={solid}]
table {%
150 0.380414598266082
200 0.243638437569662
250 0.187818882696393
300 0.157916803920662
600 0.0885782986127917
};
\addlegendentry{OPP}
\addplot [semithick, steelblue31119180, dashed, mark=*, mark size=2, mark options={solid}]
table {%
150 0.18386453283228
200 0.141060725019736
250 0.115666783511601
300 0.10053145252902
600 0.0600563459378289
};
\addlegendentry{proposed}
\end{axis}

\end{tikzpicture}} \vspace{-0.1cm}
  \caption{MSE of d.o.f estimates $\hat \nu$ as a function of $n$ when $\nu=5$ (left panel) and $\nu=3$ (right panel);  $p=100$, $\rho=0.6$.}
  \label{fig:MSEnu}
\end{figure}
We first consider the case where scatter matrix 
$\M$ has an autoregressive model (\textbf{AR(1)}) structure,
$(\boldsymbol{\Sigma})_{ij} = \eta \varrho^{|i-j|}$,
where $\eta = \tr(\M)/p$ is the scale parameter  and $\varrho$ is the correlation parameter, $\varrho \in (-1,1)$.   Since Tyler's M-estimator $\hat \M$ is invariant to the data scaling, we can set $\eta=1$ without favoring any estimator over the other.  The number of  Monte-Carlo runs is 5,000, and samples are generated from an MVT distribution with different choices of d.o.f. parameter $\nu$.

First, we investigate how the TWE of d.o.f. parameter $\nu$ compares against OPP and kurtosis estimators. \autoref{fig:boxplot_nu} displays the boxplots in the case that $p=100$ and varying sample lengths when  
$\nu=5$ or $\nu=3$. 
As can be noted, the proposed TWE attains the best accuracy as well as the smallest variability. Moreover, for $\nu=5$, its median values are right on  the spot. The kurtosis estimator obviously performs poorly when $\nu=3$ since the 4th-order moment does not exist in this case. \autoref{fig:MSEnu} shows the average mean squared error (MSE),  $(\hat  \nu - \nu)^2$, which further illustrates the benefits and high accuracy of the proposed TWE against its competitors.

\begin{figure}
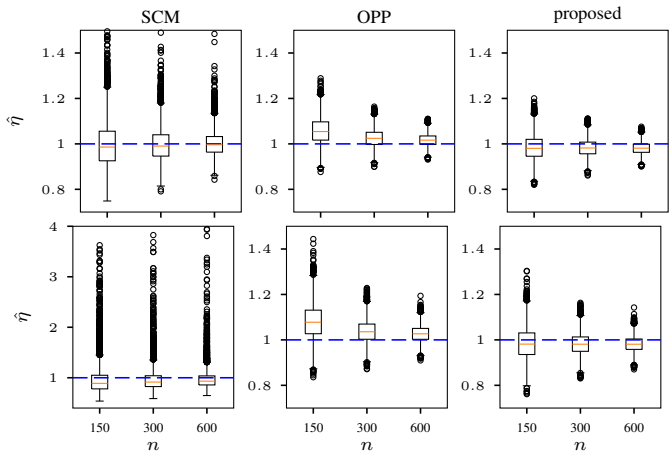

  \setlength\fwidth{1.0\columnwidth}
\centerline{\input{tikz/MVTtau_v=5_p=100_n=600_MC=5000_rho=0dot6.tex}}  \vspace{-0.3cm}
\centerline{\input{tikz/MVTtau_v=3_p=100_n=600_MC=5000_rho=0dot6.tex}} \vspace{-0.2cm}
  \caption{Estimation errors  of scale estimates $\hat \eta$ as a function of $n$ when $\nu=5$ (top) and $\nu=3$ (bottom);  $p=100$, $\rho=0.6$.}
  \label{fig:MSE_v5}
\end{figure}

 \autoref{fig:MSE_v5} displays the boxplots of different estimates of scale $\eta$. Here we compare $\hat \eta_{\text{TWE}}$ to OPP estimate of scale, defined as  
$\hat \eta_{\text{OPP}} = \tr(\hat \M)/p $, where $\hat \M$ is the MLE of scatter based on $\nu = \hat \nu_{\text{OPP}}$.   We also compare with the  scale estimate provided by the SCM, defined as $\hat \eta_{\cov} = \tr(\S)/p$, but multiplied by $\theta^{-1}=(\nu-2)/\nu$ to obtain an estimate of $\eta$; recall \eqref{eq:eta_rat}. 
We can notice from \autoref{fig:MSE_v5}  that TWE slightly underestimates the true scale while  OPP is overestimating. We also notice that the SCM estimator is clearly unbiased for $\nu=5$, but has huge variability. \autoref{fig:boxplot_nu_pic2} displays the median values of $\hat \nu$ 
 for range of $\nu$ values when $n=150$ (and $p=100$ and $\varrho=0.6$ as earlier). The proposed TWE estimator 
  significantly outperforms the other estimators for all d.o.f. $\nu \in [3,8]$.

\begin{figure}
\centerline{\begin{tikzpicture}

\definecolor{darkgray176}{RGB}{176,176,176}
\definecolor{darkorange25512714}{RGB}{255,127,14}
\definecolor{forestgreen4416044}{RGB}{44,160,44}
\definecolor{lightgray204}{RGB}{204,204,204}
\definecolor{steelblue31119180}{RGB}{31,119,180}

\begin{axis}[
width=0.9\columnwidth,
height=5.7cm, 
legend cell align={left},
legend style={
  fill opacity=0.8,
  draw opacity=1,
  text opacity=1,
  at={(0.03,0.97)},
  anchor=north west,
  draw=lightgray204
},
tick align=outside,
tick pos=left,
x grid style={darkgray176},
xlabel={\(\displaystyle \nu\)},
label style={font=\footnotesize} , 
tick label style={font=\footnotesize} , 
xmajorgrids,
xmin=2.75, 
xmax=8.25,
xtick style={color=black},
y grid style={darkgray176},
ylabel={\(\displaystyle \hat \nu\)},
ymajorgrids,
ymin=2.36944785925033, ymax=11.3962385490334,
ytick style={color=black}
]
\path [draw=forestgreen4416044, fill=forestgreen4416044, opacity=0.2]
(axis cs:3,5.3002816777344)
--(axis cs:3,4.32825770158366)
--(axis cs:4,4.80677889310405)
--(axis cs:5,5.47649808248211)
--(axis cs:6,6.1335379524786)
--(axis cs:7,6.92167483915153)
--(axis cs:8,7.61396469840647)
--(axis cs:8,10.985929881316)
--(axis cs:8,10.985929881316)
--(axis cs:7,9.81182522301371)
--(axis cs:6,8.5858476425845)
--(axis cs:5,7.43685375463398)
--(axis cs:4,6.29771857660215)
--(axis cs:3,5.3002816777344)
--cycle;

\path [draw=darkorange25512714, fill=darkorange25512714, opacity=0.2]
(axis cs:3,3.85097685674525)
--(axis cs:3,2.87858312053963)
--(axis cs:4,3.80277742328872)
--(axis cs:5,4.81109274390899)
--(axis cs:6,5.78278972576549)
--(axis cs:7,6.83745542376691)
--(axis cs:8,7.86756710696241)
--(axis cs:8,10.9671715082214)
--(axis cs:8,10.9671715082214)
--(axis cs:7,9.46792178608334)
--(axis cs:6,7.93889460490887)
--(axis cs:5,6.51388754280015)
--(axis cs:4,5.13611158845396)
--(axis cs:3,3.85097685674525)
--cycle;

\path [draw=steelblue31119180, fill=steelblue31119180, opacity=0.2]
(axis cs:3,3.56097567435208)
--(axis cs:3,2.77975652696774)
--(axis cs:4,3.56952214602384)
--(axis cs:5,4.39146107150736)
--(axis cs:6,5.17492623258474)
--(axis cs:7,5.97610719031463)
--(axis cs:8,6.72125550728119)
--(axis cs:8,8.62256217610409)
--(axis cs:8,8.62256217610409)
--(axis cs:7,7.67075530646851)
--(axis cs:6,6.63813430796652)
--(axis cs:5,5.61123092124765)
--(axis cs:4,4.58147512004147)
--(axis cs:3,3.56097567435208)
--cycle;

\addplot [semithick, black, forget plot]
table {%
3 3
4 4
5 5
6 6
7 7
8 8
};
\addplot [semithick, forestgreen4416044, mark=*, mark size=1.7, mark options={solid}]
table {%
3 4.81426968965903
4 5.5522487348531
5 6.45667591855805
6 7.35969279753155
7 8.36675003108262
8 9.29994728986122
};
\addlegendentry{{\small kurtosis}}
\addplot [semithick, darkorange25512714, mark=*, mark size=1.7, mark options={solid}]
table {%
3 3.36477998864244
4 4.46944450587134
5 5.66249014335457
6 6.86084216533718
7 8.15268860492512
8 9.41736930759191
};
\addlegendentry{{\small OPP}}
\addplot [semithick, steelblue31119180, mark=*, mark size=1.7, mark options={solid}]
table {%
3 3.17036610065991
4 4.07549863303266
5 5.0013459963775
6 5.90653027027563
7 6.82343124839157
8 7.67190884169264
};
\addlegendentry{{\small proposed}}
\end{axis}

\end{tikzpicture}}
  \vspace{-0.1cm}
  \caption{Estimated values of $\nu$; $p=100$, $\rho=0.6$, $n=150$.  The black line indicates the true value.}
  \label{fig:boxplot_nu_pic2}
  \vspace{0.4cm}
   \centerline{\begin{tikzpicture}

\definecolor{darkgray176}{RGB}{176,176,176}
\definecolor{darkorange25512714}{RGB}{255,127,14}
\definecolor{forestgreen4416044}{RGB}{44,160,44}
\definecolor{steelblue31119180}{RGB}{31,119,180}
\definecolor{lightgray204}{RGB}{204,204,204}

\begin{axis}[
width=0.9\columnwidth,
height=5.7cm, 
legend cell align={left},
legend style={
  fill opacity=0.8,
  draw opacity=1,
  text opacity=1,
  draw=lightgray204
},
tick align=outside,
tick pos=left,
x grid style={darkgray176},
xlabel={\(\displaystyle n\)},
xmajorgrids,
xmin=137.5, xmax=412.5,
xtick style={color=black},
y grid style={darkgray176},
label style={font=\footnotesize} , 
tick label style={font=\footnotesize} , 
ylabel={\(\displaystyle \hat \nu\)},
ymajorgrids,
ymin=3.8, ymax=6.2,
ytick style={color=black}
]
\path [draw=forestgreen4416044, fill=forestgreen4416044, opacity=0.2]
(axis cs:150,5.83311797839665)
--(axis cs:150,5.40131541840231)
--(axis cs:200,5.28077168654434)
--(axis cs:250,5.1708993243812)
--(axis cs:300,5.12440660645236)
--(axis cs:350,5.06710037829106)
--(axis cs:400,5.03626802337571)
--(axis cs:400,5.34700881536809)
--(axis cs:400,5.34700881536809)
--(axis cs:350,5.38886219181487)
--(axis cs:300,5.45968112278172)
--(axis cs:250,5.52570943551684)
--(axis cs:200,5.6706837578558)
--(axis cs:150,5.83311797839665)
--cycle;

\path [draw=darkorange25512714, fill=darkorange25512714, opacity=0.2]
(axis cs:150,4.45794966848282)
--(axis cs:150,4.12251622398491)
--(axis cs:200,4.0723028371725)
--(axis cs:250,4.02905690762134)
--(axis cs:300,4.03721860862135)
--(axis cs:350,4.01784954364334)
--(axis cs:400,4.0183305733582)
--(axis cs:400,4.20488535812984)
--(axis cs:400,4.20488535812984)
--(axis cs:350,4.22062055424716)
--(axis cs:300,4.24416716558802)
--(axis cs:250,4.27862448628023)
--(axis cs:200,4.33858474825184)
--(axis cs:150,4.45794966848282)
--cycle;

\path [draw=steelblue31119180, fill=steelblue31119180, opacity=0.2]
(axis cs:150,4.1898968918087)
--(axis cs:150,3.91420614747133)
--(axis cs:200,3.93311249924626)
--(axis cs:250,3.90986532320852)
--(axis cs:300,3.92951871443009)
--(axis cs:350,3.9177897244814)
--(axis cs:400,3.91971771703731)
--(axis cs:400,4.09229038756077)
--(axis cs:400,4.09229038756077)
--(axis cs:350,4.10356166582392)
--(axis cs:300,4.1160177748655)
--(axis cs:250,4.1354587390716)
--(axis cs:200,4.16505951429453)
--(axis cs:150,4.1898968918087)
--cycle;

\addplot [semithick, black,forget plot]
table {%
150 4
200 4
250 4
300 4
350 4
400 4
};
\addplot [semithick, forestgreen4416044, mark=*, mark size=1.7, mark options={solid}]
table {%
150 5.60354667918287
200 5.46748556992767
250 5.33150236114619
300 5.27830772381588
350 5.23290850048653
400 5.18516035192373
};
\addlegendentry{{\small kurtosis}}
\addplot [semithick, darkorange25512714, mark=*, mark size=1.7, mark options={solid}]
table {%
150 4.29242159202285
200 4.20115035088374
250 4.15095420101023
300 4.13919874659229
350 4.12071419805098
400 4.11435254946831
};
\addlegendentry{{\small OPP}}
\addplot [semithick, steelblue31119180, mark=*, mark size=1.7, mark options={solid}]
table {%
150 4.05743367894343
200 4.04553184904362
250 4.02177497160542
300 4.02197333958385
350 4.01224381857264
400 4.00702288592852
};
\addlegendentry{{\small proposed}}
\end{axis}

\end{tikzpicture}}
  \caption{Estimated values of $\nu$ for synthetic stock market data; $p=100, \nu=4$.}
  \label{fig:SP500-nu-vs-n}
\end{figure}

We now consider an example based on stock market data. We generate synthetic data ($p=100$ assets) with heavy tails following MVT distribution with d.o.f. $\nu=4$ and covariance matrix as measured from stocks of S\&P 500 index. Figure \ref{fig:SP500-nu-vs-n} compares the estimated value of $\nu$ versus the number of observations for the following methods: kurtosis estimator, OPP estimator \cite{ollila2021shrinking}, and the proposed estimator, with the latter being clearly superior, illustrating its promising performance for real-world financial data.

\section{Concluding remarks}   \label{sec:conclusion}

We  proposed a new robust estimator of scale parameter of an elliptical distribution based on the weights from Tyler's M-estimator, which was further used to construct an affine equivariant Tyler's M-estimator.
We then proposed a unified framework to estimate the tail parameter of an elliptical distribution. 
Finally, it should be noted that this method generalizes to complex-valued data in a straightforward manner.

\end{document}